\newtheorem{thm}{Theorem}[section]
\newtheorem{cor}[thm]{Corollary}
\newtheorem{lem}[thm]{Lemma}
\theoremstyle{definition}
\newtheorem{ass}[thm]{Assumption}
\theoremstyle{remark}
\newtheorem{rem}[thm]{Remark}
\newtheorem{counterexample}[thm]{Counterexample}
\numberwithin{equation}{section}
\newcommand{\set}[1]{\left\{#1\right\}}
\newcommand{\Ind}[1]{\mathbf{1}_{\left\{#1\right\}}}
\newcommand{\RR}{\mathbb{R}}
\newcommand{\PP}{\mathbb{P}}
\newcommand{\NN}{\mathbb{N}}
\newcommand{\FF}{\mathbb{F}}
\newcommand{\HH}{\mathbb{H}}
\newcommand{\cF}{\mathcal{F}}
\newcommand{\cG}{\mathcal{G}}
\newcommand{\cH}{\mathcal{H}}
\newcommand{\cP}{\mathcal{P}}
\newcommand{\cR}{\mathcal{R}}
\newcommand{\Rplus}{\mathbb{R}_{\geqslant 0}}
\newcommand{\scal}[2]{\left\langle{#1},{#2}\right\rangle}
\newcommand{\wt}[1]{{\widetilde{#1}}}
\newcommand{\Econd}[2]{\mathbb{E}\left[\left.#1\right|#2\right]}        
\newcommand{\E}[1]{\mathbb{E}\left[#1\right]}                     
\newcommand{\wh}[1]{{\widehat{#1}}}
\begin{document}
\title{\LARGE Convex order of discrete, continuous and predictable quadratic variation \&  applications to  options on variance\thanks{The authors would like to thank Johannes Muhle-Karbe, Mark Podolskij, David Hobson and Walter Schachermayer for discussions and comments.}}
\author{\large Martin Keller-Ressel\\ \large TU Berlin, Institut f\"ur Mathematik\\ \large mkeller@math.tu-berlin.de \and \large Claus Griessler\\ \large Universit\"at Wien, Fakult\"at f\"ur Mathematik\\ \large claus.griessler@univie.ac.at}

\date{\large September~25, 2012}

\maketitle
\begin{abstract}
We consider a square-integrable semimartingale and investigate the convex order relations between its discrete, continuous and predictable quadratic variation. As the main results, we show that if the semimartingale has conditionally independent increments and symmetric jump measure, then its discrete realized variance dominates its quadratic variation in increasing convex order. The results have immediate applications to the pricing of options on realized variance. For a class of models including time-changed L\'evy models and Sato processes with symmetric jumps our results show that options on variance are typically underpriced, if quadratic variation is substituted for the discretely sampled realized variance.
\end{abstract}


\section{Introduction}
\subsection{Variance Options}
Let $X$ be a stochastic process, and let $\cP$ be a partition of $[0,T]$ with $n + 1$ division points
\[0 = t^n_0 \le t^n_1 \le \dotsm \le t^n_n = T.\]
Then the \emph{realized variance} of $X$ over $\cP$ is given by
\begin{equation}\label{Eq:RV}
RV(X,\cP) = \sum_{k=1}^n \left(X_{t_{k}} - X_{t_{k-1}}\right)^2.
\end{equation}
In a financial market, let $S$ be the price of a stock, a currency
future, or another security, and let $X_t = \log(S_t/S_0)$ be its
(normalized) logarithm. Then the realized variance \eqref{Eq:RV} of
$X$, with the increments $t_k - t_{k-1}$ typically being business
days, is a way to measure the volatility of the security price over
the time interval $[0,T]$. Often realized variance is scaled by
$\frac{1}{T}$ such that it measures `annualized variance' over a
time interval of unit length. Since the uncertainty of future
volatility is a risk factor to which all market participants are
exposed, many wish to hedge against it, while others are willing to
take on volatility risk against a premium. To this end
\emph{volatility derivatives} have emerged, that allow market
participants to take up positions in realized variance, see e.g.
\citet{Carr2009} for an overview. Many of these derivatives pay, at
the terminal time $T$, an amount $f(\tfrac{1}{T}RV(X,\cP))$ to the
holder, where $f$ is the payoff function\footnote{Other examples are
so-called weighted variance swaps, which are volatility derivatives,
but whose payoff cannot be written as $f(\tfrac{1}{T}RV(X,\cP))$;
see \citet{Lee2010a}.}. We call such derivatives \emph{options on
variance}. Typical choices for the payoff function are
\begin{quote}
\begin{tabular}{l@{\hspace{30pt}}l}
\textbf{variance swap}: $f(x) = x - K$& \textbf{volatility swap}: $f(x) = \sqrt{x} - K$,\\[10pt]
\textbf{call option}: $f(x) = (x - K)^+$& \textbf{put option}: $f(x) = (K - x)^+$
\end{tabular}
\end{quote}
where $K \in \Rplus$. For the variance swap, $K$ may be chosen in such a way that today's fair value of the swap is zero; this choice of $K$ is called the swap rate, and we denote it by $s$. The strike $K$ for e.g. call options can then be chosen relative to the swap rate $s$ by setting $K = ks$ for some $k \in \Rplus$. We refer to this choice as \emph{relative strike} options. Note that the payoff functions listed above, with the exception of the volatility swap, are convex. In this article we will be concerned with variance options defined through a generic convex payoff function $f: \Rplus \to \RR$.
\subsection{Quadratic Variation and the Convex Order Conjecture(s)}
One of the cornerstones of the valuation theory of variance options
is the replication argument of \citet{Neuberger1992} (see also e.g.
\citet{Lee2010a}), which states that a variance swap can be
replicated by holding a static portfolio of co-maturing European options on $S$,
while dynamically trading in the underlying $S$. The strength of
Neuberger's replication argument lies in the fact that it is
essentially model-free, up to the assumptions that (a) $S$ follows a
continuous martingale, (b) European options of all
strikes and co-maturing with the variance swap are traded, and (c)
that the realized variance \eqref{Eq:RV} can be substituted by the
quadratic variation $[X,X]_T$ of $X$. Here, we are mainly interested in
the last assumption, which is based on the fact that for any
semimartingale $X$, and sequence of partitions $(\cP^n)_{n \in
\NN}$ of $[0,T]$, the realized variance $RV(X,\cP^n)$ converges in
probability to the quadratic variation $[X,X]_T$ as the mesh of the
partition tends to zero; i.e.
\begin{equation}\label{Eq:RV_convergence}
RV(X,\cP^n) \to [X,X]_T, \qquad \text{in probability}
\end{equation}
as $\mathrm{mesh}(\cP^n) \to 0$. For options on variance with
non-linear payoff, the static replication argument of Neuberger
breaks down, but in specific cases like the Heston model, dynamic
replication strategies involving European options or variance swaps
can be derived (see \citet{Broadie2008a}). In general, even if
perfect replication is not possible, the arbitrage-free price at
time zero of an option on variance is given by $e^{-rT}\E{f(\tfrac{1}{T}RV(X,\cP))}$ where $\E{.}$ denotes an
expectation under the risk-neutral pricing measure, and $r$ the
risk-free interest rate. Also for risk-neutral
pricing and imperfect hedging, realized variance is frequently
substituted by quadratic variation, since the latter is both
conceptually and computationally easier to use, and eliminates the
dependency on the nature of the partition $\cP$. See
\citep{Buhler2006, Carr2003, Kallsen2009} for examples of this
approach. All this raises questions on the qualities of the
approximation
\[\E{f\left(\frac{1}{T}RV(X,\cP)\right)} \approx \E{f\left(\frac{1}{T}[X,X]_T\right)}, \]
that is (a) how precise is it? and (b) is there a systematic bias? While the precision of the approximation has been studied in the asymptotic limit $n \to \infty$ and $T \to 0$ (see \citet{Broadie2008, Sepp2010, Farkas2010} resp. \citet{KM2010}), we are here interested in the existence of a systematic bias \emph{without asymptotics}, i.e. for fixed and finite $n$ and $T$. Numerical evidence given in \citet{Gatheral2008, Buhler2006, KM2010} strongly supports such a bias and suggests that the price of a variance option with convex payoff, evaluated on discretely sampled variance, is higher than the price of an option with the same payoff, evaluated on quadratic variation. From this evidence we are led to the conjecture that
\begin{equation}\label{Eq:convex_order_conj}
\E{f\left(RV(X,\cP)\right)} \ge \E{f\left([X,X]_T\right)},
\end{equation}
for all convex functions $f$, or in other words that realized
variance dominates quadratic variation in convex
order.\footnote{Note that the factor $\frac{1}{T}$ will be absorbed
into the function $f$ from now on.} We call this statement the
`convex order conjecture' between discrete and continuous variance and write it in more concise form as
$RV(X,\cP) \ge_\text{cx} [X,X]_T$.\footnote{This conjecture has been
initially conceived in discussions with Johannes Muhle-Karbe at the
Bachelier World Congress 2010 in Toronto.} A slightly weaker version
is obtained if \eqref{Eq:convex_order_conj} is required to hold only
for all convex, \emph{increasing} functions $f$. We call this the
`increasing convex order conjecture'; it is equivalent to the
realized variance dominating quadratic variation in increasing
convex order, or more concisely $RV(X,\cP) \ge_\text{icx} [X,X]_T$.
If $\E{RV(X,\cP)} = \E{[X,X]_T}$ then the convex order conjecture
and the increasing convex order conjecture are equivalent, see \citet[Thm.~4.A.35]{Shaked2007}. 

Similar questions can be asked about the relationship between the quadratic variation $[X,X]$ and the \emph{predictable} quadratic variation $\scal{X}{X}$. For continuous semimartingales, of course, $\scal{X}{X}$ coincides with $[X,X]$. For discontinuous processes $\scal{X}{X}$ is different from $[X,X]$, but sometimes more analytically tractable and has been used as a substitute for realized variance, e.g. in \citet{Kallsen2009} for exactly this reason. In the same article systematic underpricing of variance puts with predictable quadratic variation in comparison to quadratic variation has been observed, such that one may conjecture the relation
\begin{equation}\label{Eq:convex_order_conj_pred}
\E{f\left([X,X]_T\right)} \ge \E{f\left(\scal{X}{X}_T\right)}
\end{equation}
for all convex functions $f$. Note that $\scal{X}{X}$ is the predictable compensator of $[X,X]$ such that under suitable integrability assumptions $\E{[X,X]_T} = \E{\scal{X}{X}_T}$ and increasing convex order is equivalent to convex order. The main goal of this article is to prove the presented conjectures under certain assumptions on $X$ and to outline the consequences for the pricing of options on variance.

\subsection{Strategy of the proofs and related work}
The convex (or increasing convex) order relation $Y \le_\text{cx} Z$ ($Y \le_\text{icx} Z$) between two random
variables is a statement about the probability laws of $Y$ and $Z$,
and thus not sensitive to the nature of the dependency between $Y$
and $Z$. Nevertheless it is often a useful strategy to couple $Y$ and $Z$, i.e. to define them on a common probability space, which allows to use stronger and
more effective tools, typically martingale arguments.\footnote{See
\citet{Lindvall1992} for the many uses of couplings in probability
theory.} For the proof of the convex and increasing convex order conjecture between discrete and continuous quadratic variation, we will couple $RV(X,\cP)$ and $[X,X]_T$ by embedding them into a reverse
martingale, as the initial and the limit law respectively.

The idea for the construction of this reverse martingale comes from the literature on quadratic variation of L\'evy processes, or more generally processes with independent increments. \citet{Cogburn1961} show that the realized variance of a process with independent increments over a sequence of nested partitions $\cP^n$ converges almost surely (and not just in probability) to the quadratic variation. The crucial step of their proof is to show that the sequence of realized variances over nested partitions forms a \emph{reverse martingale} when the underlying process has symmetric distribution. The almost sure convergence then follows from Doob's martingale convergence theorem and can be extended to arbitrary processes with independent increments by a symmetrization argument. This technique can in fact be traced back to \citet{Levy1940} where a corresponding result for Brownian motion is shown. 

Finally let us remark, that convex order relations in the context of variance options have also been explored by \citet{Carr2010}, but with a very different objective. The authors
observe that in an exponential L\'evy model $S_t = S_0e^{X_t}$,
where $X$ is a L\'evy process without Gaussian component, the
annualized quadratic variation $\tfrac{1}{t}[X,X]_t$ is decreasing
with $t$ in convex order. Consequently, for each convex payoff $f$
the term structure $t \mapsto \E{f(\frac{1}{t}[X,X]_t)}$ of variance
options is decreasing in $t$. Also in the proof of \citet{Carr2010} a reverse martingale is used: the process $t \mapsto \frac{1}{t}[X,X]_t$.

\section{Definitions and Preliminaries}
We briefly introduce some definitions and properties that will be used in the results and proofs of Section~\ref{Sec:main}.\\

For a stochastic process $X$ and a partition $\cP$ of $[0,T]$, the realized variance $RV(X,\cP)$ of $X$ over $\cP$ has been defined in \eqref{Eq:RV}. In the case that $RV(X,\cP)$ has finite expectation, we also define the \emph{centered realized variance} by
\[\overline{RV}(X,\cP) = RV(X,\cP) - \E{RV(X,\cP)}.\]
Finally, to allow for certain generalizations of our results, the \emph{$h$-centered realized variance} is defined, for any function $h: \Rplus \to \RR$, by
\begin{equation}\label{Eq:centering}
\overline{RV}^h(X,\cP^n) = RV(X,\cP^n) - h\Big(\E{RV(X,\cP^n)}\Big).
\end{equation}
Realized variance and centered realized variance can be regarded as the special cases $h(x) = 0$ and $h(x) = x$. We use the same notation to define an $h$-centered version of the quadratic variation $[X,X]_T$, i.e. we define
\[\overline{[X,X]}^h_T = [X,X]_T - h\Big(\E{[X,X]_T}\Big),\]
provided the expectation is finite. We will only be interested in $h$-centerings where $h: \Rplus \to \RR$ is Lipschitz continuous with Lipschitz constant at most $1$; we denote the set of these functions by $\mathrm{Lip}_1(\Rplus)$.\\
A sequence $(\cP^n)_{n \in \NN}$ of partitions is called \emph{nested}, if all division points of $\cP^{n-1}$ are also division points of $\cP^n$. Note that by inserting intermediate partitions we can always consider $(\cP^n)$ as a subsequence of a sequence $(\wt{\cP}^n)$ of nested partitions, where each $\wt{\cP}^n$ has exactly $n+1$ partition points. For the results in this article it will be sufficient to consider partition sequences of this type. Finally, the mesh of a partition $\cP$ is defined as usual by $\mathrm{mesh}(\cP) = \sup_{k \in \set{1, \dots, n}} (t_{k} - t_{k-1})$. As pointed out above
$RV(X,\cP^n) \to [X,X]_T$ in probability, if $\mathrm{mesh}(\cP^n) \to 0$ for any semimartingale $X$. This results holds also for non-nested partitions, see e.g. \citet[Thm.~I.4.47]{Jacod1987}.\\
Finally, let $(\cG_n)$ be a decreasing sequence of $\sigma$-algebras, and let $X_n$ be a sequence of integrable random variables such that $X_n \in \cG_n$. Then $(X_n)$ is called a \emph{reverse martingale} with respect to $\cG_n$, if
\begin{equation}\label{Eq:reverse_mg_prop}
\Econd{X_{n-1}}{\cG_n} = X_n
\end{equation}
for all $n \in \NN$. Note that $(\cG_n)$ is \emph{decreasing}, and thus not a filtration.
Similarly, $(X_n)$ is called a reverse submartingale if \eqref{Eq:reverse_mg_prop} holds with `$\ge$' instead of `$=$'.  An important result is the following (cf. \citet[29.3.IV]{Loeve1963}): A reverse submartingale converges almost surely and also in $L^1$ to a limit $X_\infty$. Note that in contrast to the (forward) submartingale convergence theorem, no additional conditions on $(X_n)$ are needed. If we define the tail $\sigma$-algebra $\cG_\infty = \bigcap_{k=1}^n \cG_n$, then for any $n \in \NN$ the limit $X_\infty$ can be represented as
\[\Econd{X_n}{\cG_\infty} = X_\infty.\]

\section{Results on discrete and continuous quadratic variation}\label{Sec:main}
\subsection{Reverse martingales from realized variance}
Let $(\Omega, \cF, \FF, \PP)$ be a filtered probability space, where $\FF$ satisfies the usual conditions of right-continuity and $\PP$-completeness. Let $Y$ be an $\FF$-adapted cadlag process, and let $\cH$ be a $\PP$-complete $\sigma$-algebra such that $\cH \subset \cF_0$. We say that $Y$ is a process with $\cH$-conditionally independent increments if for all $0 \le s \le t$ the increment $Y_t - Y_s$ is independent of $\cF_s$, conditionally on $\cH$. This definition includes time-changed L\'evy processes, and additive processes in the sense of \citet{Sato1999}. The $\cH$-conditional independence is equivalent to the assertion that
\begin{equation}\label{Eq:cond_independence}
\Econd{f(Y_t - Y_s)Z}{\cH} = \Econd{f(Y_t - Y_s)}{\cH} \cdot \Econd{Z}{\cH}
\end{equation}
for all bounded measurable $f$ and bounded $\cF_s$-measurable random variables $Z$. See also \citet[Chapter~6]{Kallenberg1997} for results on conditional independence and \citet{Jacod1987} for processes with conditionally independent increments.

Moreover, we say that a process $Y$ has $\cH$-conditionally symmetric increments, if 
\begin{equation}\label{Eq:cond_symmetry}
\Econd{f(Y_t - Y_s)}{\cH} = \Econd{f(Y_s - Y_t)}{\cH}
\end{equation}
for all $t,s \ge 0$ and bounded measurable $f$. Alternatively we can use conditional charateristic functions to characterize conditional symmetry, i.e. a random variable $X$ is $\cH$-conditionally symmetric if and only if
\begin{equation}\label{eq:charf_equal}
\Econd{\exp\left(iu X\right)}{\cH} = \Econd{\exp\left(-iuX\right)}{\cH}, \qquad \forall\, u \in \RR.
\end{equation}
It will be helpful to know that a process with conditionally symmetric and independent increments\footnote{Here and in the following `conditionally symmetric and independent' is a shorthand for `conditionally symmetric and conditionally independent'} is a martingale up to an integrability assumption.
\begin{lem}\label{Lem:martingale}
Let $X$ be an $\FF$-adapted process satisfying $\E{|X_t|} < \infty$ for all $t \ge 0$. If $X$ has $\cH$-conditionally symmetric and independent increments, then it is a martingale.
\end{lem}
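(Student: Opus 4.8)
The plan is to verify the martingale property in its weak form: for every $0 \le s \le t$ it suffices to show that $\E{(X_t - X_s)\,Z} = 0$ for all bounded $\cF_s$-measurable $Z$. Indeed, the conditional expectation $\Econd{X_t - X_s}{\cF_s}$ is the $\cF_s$-measurable random variable characterised by the family of identities $\E{\Econd{X_t - X_s}{\cF_s}\,Z} = \E{(X_t - X_s)\,Z}$ over all such $Z$, so establishing that the right-hand side vanishes forces $\Econd{X_t - X_s}{\cF_s} = 0$, which is precisely the martingale property. The key structural fact that makes the hypotheses available is that $\cH \subset \cF_0 \subset \cF_s$, so I may freely insert a conditioning on $\cH$ and apply the tower property.

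The only genuine difficulty is that the defining properties \eqref{Eq:cond_symmetry} and \eqref{Eq:cond_independence} are stated for \emph{bounded} test functions $f$, whereas I must apply them to the merely integrable increment $X_t - X_s$ itself. I would resolve this by truncation. Fix the odd, bounded, measurable functions $f_n(x) = (x \wedge n) \vee (-n)$. Applying conditional symmetry \eqref{Eq:cond_symmetry} to $f_n$ and using oddness $f_n(-x) = -f_n(x)$ gives
\[\Econd{f_n(X_t - X_s)}{\cH} = \Econd{f_n(X_s - X_t)}{\cH} = -\Econd{f_n(X_t - X_s)}{\cH},\]
so that $\Econd{f_n(X_t - X_s)}{\cH} = 0$ for every $n$.

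Next I would invoke conditional independence \eqref{Eq:cond_independence} with this $f_n$ and a bounded $\cF_s$-measurable $Z$ to factorise
\[\Econd{f_n(X_t - X_s)\,Z}{\cH} = \Econd{f_n(X_t - X_s)}{\cH} \cdot \Econd{Z}{\cH} = 0,\]
and then take total expectations to obtain $\E{f_n(X_t - X_s)\,Z} = 0$ for all $n$. Finally I would let $n \to \infty$: since $f_n(X_t - X_s) \to X_t - X_s$ pointwise and $\abs{f_n(X_t - X_s)\,Z} \le (\abs{X_t} + \abs{X_s})\,\norm{Z}_\infty$, which is integrable by the hypothesis $\E{\abs{X_t}} < \infty$, dominated convergence yields $\E{(X_t - X_s)\,Z} = 0$. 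This is exactly the weak martingale identity, completing the argument.

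I expect the truncation-and-limit passage to be the main (and essentially only) obstacle, since everything else is a direct unwinding of the definitions of conditional symmetry and conditional independence together with the tower property. The integrability assumption $\E{\abs{X_t}} < \infty$ enters precisely at the final step, to supply the dominating function for the limit; the truncations $f_n$ are chosen odd so that symmetry immediately kills the conditional mean, and bounded so that both \eqref{Eq:cond_symmetry} and \eqref{Eq:cond_independence} apply verbatim.
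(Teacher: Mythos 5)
Your proof is correct and follows essentially the same route as the paper's: test against bounded $\cF_s$-measurable $Z$, condition on $\cH$, factorise via conditional independence, and kill the conditional mean of the increment via symmetry. The only difference is that the paper applies \eqref{Eq:cond_independence} and \eqref{Eq:cond_symmetry} directly to the (unbounded) identity function, whereas your truncation by the odd bounded functions $f_n$ together with dominated convergence justifies that step rigorously --- a welcome refinement of the same argument rather than a different one.
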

\begin{proof}Let $t \ge s \ge 0$ and let $Z$ be a bounded $\cF_s$-measureable random variable. Using first conditional independence and then conditional symmetry of increments we obtain
\begin{align*}
\E{(X_t - X_s)Z} &= \E{\Econd{(X_t - X_s)Z}{\cH}} = \E{\Econd{(X_t - X_s)}{\cH}\cdot \Econd{Z}{\cH}} = \\
&= \E{\Econd{(X_s - X_t)}{\cH}\cdot \Econd{Z}{\cH}} = \E{(X_s - X_t)Z}
\end{align*}
and conclude that $\E{(X_t - X_s)Z} = 0$. Since $Z$ was an arbitrary bounded $\cF_s$-measurable random variable it follows that 
$\Econd{(X_t - X_s)}{\cF_s} = 0$ and hence that $X$ is a martingale.
\end{proof}
The following Lemma establishes a reflection principle for processes whose increments are conditionally symmetric and independent.
\begin{lem}\label{Lem:symmetrization}
Let $Y$ be a process with $\cH$-conditionally symmetric and independent increments and let $t_* \ge 0$. Then the process $\wh{Y}$ defined by
 \begin{equation}\label{Eq:reflection}
 \wh{Y}_s = Y_{s \wedge t^*} - \left(Y_s - Y_{s \wedge t^*}\right)
 \end{equation}
is equal in law to $Y$, conditionally on $\cH$.\\
Moreover, $RV(\wh{Y},\cP) = RV(Y,\cP)$ for all partitions $\cP$ for which $t_*$ is a partition point.
\end{lem}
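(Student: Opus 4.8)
This statement has two independent parts: the conditional equality in law of $\wh Y$ and $Y$, and the pathwise identity of the two realized variances. The plan is to handle them separately, starting from the elementary observation that $\wh Y$ splits at $t_*$. For $s \le t_*$ we have $s \wedge t_* = s$ and hence $\wh Y_s = Y_s$, while for $s > t_*$ we have $\wh Y_s = 2 Y_{t_*} - Y_s$, so after $t_*$ the path of $\wh Y$ is the reflection of the path of $Y$ about the level $Y_{t_*}$. In particular the increment of $\wh Y$ across any interval equals the corresponding increment of $Y$ when the interval lies entirely before $t_*$, and equals its negative when the interval lies entirely after $t_*$.

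For the first part, the plan is to show that $Y$ and $\wh Y$ have the same finite-dimensional distributions conditionally on $\cH$; since both processes are cadlag, this determines their conditional laws. Fix time points $0 = s_0 < s_1 < \dots < s_m$ and, since inserting further points only refines the collection, assume without loss that $t_* = s_j$ for some $j$. Writing $\Delta_k Y = Y_{s_k} - Y_{s_{k-1}}$, the observation above shows that the increment vector of $\wh Y$ is $(\Delta_1 Y, \dots, \Delta_j Y, -\Delta_{j+1} Y, \dots, -\Delta_m Y)$ while that of $Y$ is $(\Delta_1 Y, \dots, \Delta_m Y)$, and both processes share the common starting value $\wh Y_0 = Y_0$. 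It therefore suffices to show that these two increment vectors have the same conditional joint law given $\cH$. First I would apply \eqref{Eq:cond_independence} inductively over $k$, taking $Z$ to be the product of the increment factors with index below $k$ together with a factor depending on $Y_0$, to factorize the conditional joint law of either vector into the conditional law of $Y_0$ times the product of the conditional marginal laws of the single increments. Then conditional symmetry \eqref{Eq:cond_symmetry} gives $\Econd{f(-\Delta_k Y)}{\cH} = \Econd{f(\Delta_k Y)}{\cH}$ for each reflected increment, so the two factorized laws agree factor by factor.

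The second part is a pathwise identity. Let $\cP$ consist of $0 = u_0 < \dots < u_n = T$ with $t_* = u_j$. By the reflection bookkeeping, each increment of $\wh Y$ across $(u_{k-1}, u_k)$ equals $\pm(Y_{u_k} - Y_{u_{k-1}})$, so its square equals $(Y_{u_k} - Y_{u_{k-1}})^2$; summing over $k$ yields $RV(\wh Y, \cP) = RV(Y, \cP)$. The hypothesis that $t_*$ be a partition point is precisely what ensures that no interval $(u_{k-1}, u_k)$ straddles $t_*$; for a straddling interval the increment of $\wh Y$ would be $(Y_{t_*} - Y_{u_{k-1}}) - (Y_{u_k} - Y_{t_*})$, whose square differs from $(Y_{u_k} - Y_{u_{k-1}})^2$ in general, so the identity would break down.

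The reflection bookkeeping and the squared-increment computation are routine. The step that requires genuine care, and where I expect the main work to lie, is the factorization of the conditional joint law of the increments: conditional independence must be invoked in the correct nested order, each new increment $\Delta_k Y$ being conditionally independent, given $\cH$, of the $\sigma$-algebra generated by $Y_0$ and the earlier increments (which is contained in $\cF_{s_{k-1}}$), before conditional symmetry can be applied term by term.
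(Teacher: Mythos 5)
Your argument is correct and essentially the one in the paper: both reduce the conditional equality in law to the joint law of the increments over a grid containing $t_*$, factorize that joint law via the conditional independence \eqref{Eq:cond_independence}, flip the post-$t_*$ increments using the conditional symmetry \eqref{Eq:cond_symmetry}, and verify the realized-variance identity by the same pathwise squared-increment computation. The only (harmless) difference is that the paper works with conditional characteristic functions and splits the product into just the pre- and post-$t_*$ blocks, whereas you factorize fully down to single increments --- which is in fact the cleaner way to justify the symmetry step, since \eqref{Eq:cond_symmetry} is stated only for individual increments.
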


\begin{proof}
Showing that $\wh{Y}$ is equal in law to $Y$, conditionally on $\cH$, is equivalent to showing that for any sequence $t_0 \le t_1 \le \dotsm t_n$ and bounded measurable $f: \RR^n \to \RR$ it holds that 
\begin{equation}\label{eq:equality}
\Econd{f(Y_{t_1} - Y_{t_0}, \dotsc, Y_{t_n} - Y_{t_{n-1}})}{\cH} = \Econd{f(\wh{Y}_{t_1} - \wh{Y}_{t_0}, \dotsc, \wh{Y}_{t_n} - \wh{Y}_{t_{n-1}})}{\cH}.
\end{equation}
To show this equality it is sufficient to prove the equality of the $\cH$-conditional characteristic functions of each side. By inserting intermediate points we may assume without loss of generality that $t_* = t_m$ for some $m \in \set{0,\dotsc, n}$. Using the $\cH$-conditional independence \eqref{Eq:cond_independence} and the $\cH$-conditional symmetry of increments we obtain
\begin{align*}
&\Econd{\exp\left(i \sum_{k=1}^n u_k \left(\wh{Y}_{t_n} - \wh{Y}_{t_{n-1}} \right) \right)}{\cH} = \\
&\Econd{\exp\left(i \sum_{k=1}^m u_k \left(Y_{t_k} - Y_{t_{k-1}}\right) \right)}{\cH} \cdot \Econd{\exp\left(-i \sum_{k=m+1}^n u_k \left(Y_{t_k} - Y_{t_{k-1}}\right) \right)}{\cH} = \\
&\Econd{\exp\left(i \sum_{k=1}^m u_k \left(Y_{t_k} - Y_{t_{k-1}}\right) \right)}{\cH} \cdot \Econd{\exp\left(i\sum_{k=m+1}^n u_k \left(Y_{t_k} - Y_{t_{k-1}}\right) \right)}{\cH} = \\
&= \Econd{\exp\left(i \sum_{k=1}^n u_k \left(Y_{t_k} - Y_{t_{k-1}}\right) \right)}{\cH},
\end{align*}
which proves \eqref{eq:equality}. 
It remains to show that $RV(\wh{Y},\cP) = RV(Y,\cP)$ for partitions $\cP$ for which $t_*$ is a partition point. Denote the partition by $t_0 \le \dotsm \le t_n$ as before and assume that $t_*$ is the $m$-th partition point, i.e. $t_* = t_m$. Then
\begin{align*}
RV(\wh{Y},\cP) &= \sum_{k=1}^m \left(Y_{t_k} - Y_{t_{k-1}}\right)^2 + \sum_{k=m+1}^n (-1)^2 \left(Y_{t_k} - Y_{t_{k-1}}\right)^2 = \\ &=\sum_{k=1}^n \left(Y_{t_k} - Y_{t_{k-1}}\right)^2 = RV(Y,\cP).\qedhere
\end{align*}
\end{proof}

Using this Lemma we show that we can construct a reverse martingale from the realized variances of a process with conditionally symmetric and  independent increments, when the realized variances are taken over nested partitions. For a process with (unconditionally) symmetric and independent increments this result has been shown by \citet[Thm.~1]{Cogburn1961}. Our result is a minor variation of the theorem of \citeauthor{Cogburn1961}, but will be generalized in the next section.

\begin{thm}\label{Thm:reverse_martingale}
 Let $X$ be a process with $\cH$-conditionally symmetric and independent increments and let $(\cP^n)_{n \in \NN}$ be a sequence of nested partitions of $[0,T]$ such that \linebreak $\E{RV(X,\cP^{n})} < \infty$ for all $n \in \NN$. Then the realized variance of $X$ over the partitions $(\cP^n)$ is a reverse martingale, i.e.
 it satisfies
 \begin{equation}\label{Eq:bw_martingale}
 \Econd{RV(X,\cP^{n-1})}{\cG_n} = RV(X,\cP^{n}), \qquad (n \in \NN)
 \end{equation}
 where
 \begin{equation}\label{Eq:define_Gn}
 \cG_n = \cH \vee \sigma\Big(RV(X,\cP^{n}), RV(X,\cP^{n+1}), \dotsc\Big).
 \end{equation}
\end{thm}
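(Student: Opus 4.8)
The plan is to establish the reverse martingale property by reducing the general nested case to the one-step reflection provided by Lemma~\ref{Lem:symmetrization}.
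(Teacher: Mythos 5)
Your one-sentence plan points in the right direction --- the paper's proof does exactly this: it reduces to the case where $\cP^{n-1}$ and $\cP^n$ differ by a single division point $t^*$, inserted between consecutive points $a,b$ of $\cP^{n-1}$, and then applies the reflection of Lemma~\ref{Lem:symmetrization} at $t^*$. But a plan is not a proof, and essentially the entire argument is still missing. Three concrete ingredients have to be supplied. First, the algebraic identity
\[
RV(X,\cP^{n-1}) - RV(X,\cP^{n}) = (X_b - X_{t^*})(X_{t^*} - X_a),
\]
which identifies the quantity whose $\cG_n$-conditional expectation must vanish. Second --- and this is the step that makes the reflection usable at all --- the observation that since $t^*$ is a division point of every $\cP^k$ with $k \ge n$, Lemma~\ref{Lem:symmetrization} gives $RV(\wh{X},\cP^k) = RV(X,\cP^k)$ for all $k \ge n$, so the conditioning $\sigma$-algebra $\cG_n$ is \emph{unchanged} when $X$ is replaced by its reflection $\wh{X}$. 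Without this invariance, the conditional equality in law of $X$ and $\wh{X}$ says nothing about conditional expectations given $\cG_n$, and the reduction to the reflection lemma does not close.

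Third, these facts must be combined into the actual verification: setting $U := \Econd{RV(X,\cP^{n-1}) - RV(X,\cP^{n})}{\cG_n}$ and testing against $H \in \cH$ and $Q \in \sigma\bigl(RV(X,\cP^{n}), RV(X,\cP^{n+1}), \dotsc\bigr)$, one uses that $Q$ is simultaneously a function of the reflected realized variances and that the reflected increments satisfy $(\wh{X}_b - \wh{X}_{t^*})(\wh{X}_{t^*} - \wh{X}_a) = -(X_b - X_{t^*})(X_{t^*} - X_a)$, to conclude $\E{UHQ} = -\E{UHQ} = 0$, hence $U = 0$. You should also say a word about the initial reduction: a nested step may insert several points, so one either interpolates intermediate partitions and iterates the one-point argument, or notes (as the paper does in its preliminaries) that it suffices to treat sequences in which each $\cP^n$ adds exactly one point. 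As written, none of this is on the page, so the proposal cannot be assessed as a correct proof --- only as a correct choice of strategy.
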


\begin{proof}
Since $(\cP^n)$ is a nested sequence of partitions, we may assume without loss of generality that $\cP^{n-1}$ and $\cP^n$ differ  by a single division point, which we denote by $t^*$. Denote by $a, b$ the two closest division points of $\cP^{n-1}$, i.e. $t^*$ is inserted into the interval $[a,b]$. Let $\wh{X}$ be the process $X$ reflected to the right of $t^*$ as in \eqref{Eq:reflection}. By Lemma~\ref{Lem:symmetrization} $\wh{X}$ is equal in law to $X$, conditionally on $\cH$. Moreover, also by Lemma~\ref{Lem:symmetrization}, $RV(\wh{X},\cP^k) = RV(X,\cP^k)$ for all $k \ge n$. This implies that
\begin{align}\label{Eq:equal_filtration}
\cG_n &= \cH \vee \sigma\Big(RV(X,\cP^{n}), RV(X,\cP^{n+1}), \dotsc\Big) = \\
      &= \cH \vee \sigma\Big(RV(\wh{X},\cP^{n}), RV(\wh{X},\cP^{n+1}),\dotsc\Big). \notag
\end{align}
To ease notation we abbreviate $\cR_n := \sigma\Big(RV(X,\cP^{n}), RV(X,\cP^{n+1}), \dotsc\Big)$. The next step is to calculate $U := \Econd{RV(X,\cP^{n-1}) - RV(X,\cP^{n})}{\cG_n}$. By direct calculation we obtain that 
 \begin{equation}\label{Eq:RV_decomp}
 RV(X,\cP^{n-1}) - RV(X,\cP^{n}) = (X_b - X_{t^*}) (X_{t^*} - X_a)
 \end{equation}
and conclude from the properties of conditional expectations that
\[\E{U H Q} = \E{(X_b - X_{t^*}) (X_{t^*} - X_a)HQ}, \qquad \forall\,H\in \cH, \,Q \in \cR_n.\]
Using Lemma~\ref{Lem:symmetrization} and \eqref{Eq:equal_filtration} we obtain
\begin{multline}
\E{(X_b - X_{t^*}) (X_{t^*} - X_a) HQ} = \E{\Econd{(X_b - X_{t^*}) (X_{t^*} - X_a) Q}{\cH}H} = \\
= \E{\Econd{(\wh{X}_b - \wh{X}_{t^*}) (\wh{X}_{t^*} - \wh{X}_a) Q}{\cH}H} = -\E{(X_b - X_{t^*}) (X_{t^*} - X_a) HQ},
\end{multline}
and hence that $\E{UHQ} = 0$ for all $H \in \cH, Q \in \cR_n$. From $\cG_n = \cH \vee \cR_n$ we conclude that indeed
\[\Econd{RV(X,\cP^{n-1}) - RV(X,\cP^{n})}{\cG_n} = U = 0\]
showing the reverse martingale property \eqref{Eq:bw_martingale}. \end{proof}

\subsection{Coupling Realized Variance and Quadratic Variation}
To introduce the quadratic variation to the setting we now add the assumption that $X$ is a semimartingale, such that $RV(X,\cP^n) \to [X,X]_T$ in probability whenever $\mathrm{mesh}(\cP^n) \to 0$. Furthermore we require that $X$ is a square-integrable semimartingale, i.e. that it is a special semimartingale with canonical decomposition $X = X_0 + N + A$ where $N$ is a square-integrable martingale and $A$ has square-integrable total variation. This assumption implies in particular that $\sup_{t \in [0,T]} (X_t - X_0)^2$ is integrable, see \citet[Ch.~V.2]{Protter2004}.\footnote{These assumptions are very natural in the context of variance options, where $X$ is the log-price of a security $S = S_0 e^X$. Under the pricing measure $e^X$ is a local martingale and the semimartingale property of $X$ follows automatically by \citep[I.4.57]{Jacod1987}. The square-integrability of $X$ then merely ensures that the prices of variance options (in particular variance swaps) are finite.} Combining these assumptions with Theorem~\ref{Thm:reverse_martingale} immediately gives the following Corollary.

\begin{cor}
Suppose that $X$ is a square-integrable martingale with $\cH$-conditionally symmetric and independent increments and that $(\cP^n)_{n \in \NN}$ is a nested sequence of partitions such that $\mathrm{mesh}(\cP^n) \to 0$. Then $RV(X,\cP^{n}) \to [X,X]_T$ a.s., $\E{[X,X]_T} < \infty$ and
 \begin{equation}\label{Eq:conditional_qv}
 \Econd{RV(X,\cP^n)}{\cG_\infty} = [X,X]_T
 \end{equation}
holds for any $\cP^n$ and with $\cG_\infty = \cH \vee \bigcap_{n \in \NN} \sigma(RV(X,\cP^n))$.
\end{cor}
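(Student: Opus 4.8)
The plan is to recognize $RV(X,\cP^n)$ and $[X,X]_T$ as the generic term and the limit, respectively, of a single reverse martingale, and then to obtain all three claims by combining the reverse submartingale convergence theorem recalled above with the convergence in probability $RV(X,\cP^n)\to[X,X]_T$ that holds for any semimartingale as $\mathrm{mesh}(\cP^n)\to0$.

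First I would verify the hypotheses of Theorem~\ref{Thm:reverse_martingale}. Since $X$ is a square-integrable martingale, each increment $X_{t_k}-X_{t_{k-1}}$ lies in $L^2$, so $\E{RV(X,\cP^n)}=\sum_k \E{(X_{t_k}-X_{t_{k-1}})^2}<\infty$ for every $n$, and conditionally symmetric and independent increments are assumed. Theorem~\ref{Thm:reverse_martingale} then shows that $(RV(X,\cP^n))_{n\in\NN}$ is a reverse martingale for the decreasing family $\cG_n=\cH\vee\sigma(RV(X,\cP^n),RV(X,\cP^{n+1}),\dots)$, and in particular a reverse submartingale. Applying the convergence theorem of \citet{Loeve1963}, the sequence converges almost surely and in $L^1$ to some limit $W$, and for every $n$ one has $\Econd{RV(X,\cP^n)}{\cG_\infty}=W$ with $\cG_\infty=\bigcap_{n}\cG_n$. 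I would also observe that the reverse martingale identity makes $\E{RV(X,\cP^n)}$ constant in $n$, so that its finite common value together with the $L^1$ convergence yields $\E{W}<\infty$.

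It then remains to identify $W$ with $[X,X]_T$. Since $X$ is a semimartingale and $\mathrm{mesh}(\cP^n)\to0$, the realized variances converge to $[X,X]_T$ in probability, while the almost sure convergence just established gives convergence to $W$ in probability as well; uniqueness of the limit in probability forces $W=[X,X]_T$ almost surely. This single identification delivers at once the almost sure convergence $RV(X,\cP^n)\to[X,X]_T$, the integrability $\E{[X,X]_T}=\E{W}<\infty$, and the representation $\Econd{RV(X,\cP^n)}{\cG_\infty}=[X,X]_T$. The only point I expect to require genuine care is the description of the limiting $\sigma$-algebra: the convergence theorem gives $\cG_\infty=\bigcap_n\cG_n=\bigcap_n(\cH\vee\cR_n)$, whereas the statement asserts $\cG_\infty=\cH\vee\bigcap_n\cR_n$ with $\cR_n=\sigma(RV(X,\cP^n),RV(X,\cP^{n+1}),\dots)$; one inclusion is immediate, but checking that intersecting the decreasing family commutes with adjoining the fixed $\cH$ is the one step that is not purely formal, and this, rather than the convergence argument, is the main obstacle.
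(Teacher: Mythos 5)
Your proposal is correct and follows essentially the same route as the paper: verify $\E{RV(X,\cP^n)}<\infty$ (the paper bounds $RV(X,\cP^n)$ by $2(n+1)\sup_{t\in[0,T]}(X_t-X_0)^2$, while you use that the increments of a square-integrable martingale lie in $L^2$ --- both work), invoke Theorem~\ref{Thm:reverse_martingale}, apply the reverse-martingale convergence theorem, and identify the limit with $[X,X]_T$ via the convergence in probability from \citep[Thm.~I.4.47]{Jacod1987}. The $\sigma$-algebra subtlety you flag, namely whether $\bigcap_n(\cH\vee\cR_n)$ equals $\cH\vee\bigcap_n\cR_n$, is a fair observation but is glossed over by the paper as well; for everything that follows only the existence of \emph{some} $\cG_\infty$ satisfying \eqref{Eq:conditional_qv} is needed, and that is already supplied by taking $\cG_\infty=\bigcap_n\cG_n$.
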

\begin{proof}
Since $RV(X,\cP^n) \le 2(n+1)\sup_{t \in [0,T]} (X_t - X_0)^2$ and $X$ is square-integrable, we have that $\E{RV(X,\cP^n)} < \infty$ for all $n \in \NN$. It follows from Theorem~\ref{Thm:reverse_martingale} that the sequence $(RV(X,\cP^n))_{n \in \NN}$ is a reverse $\cG_n$-martingale. By the convergence theorem for reverse martingales this sequence converges almost surely and also in $L^1$, and it remains to identify the limit. Since $\mathrm{mesh}(\cP^n) \to 0$ we have by \citep[Theorem~I.4.47]{Jacod1987} that $RV(X,\cP^{n})$ converges in probability to the quadratic variation $[X,X]_T$. We conclude that $[X,X]_T$ is in fact the almost sure limit of $RV(X,\cP^{n})$ as $n \to \infty$, that $\E{[X,X]_T} < \infty$ and that $\Econd{RV(X,\cP^n)}{\cG_\infty} = [X,X]_T$.
\end{proof}

In the case that we are only interested in linear expectations of realized variance and quadratic variation, the symmetry assumption on the increments of $X$ can be dropped and we obtain the following.

 \begin{cor}\label{cor:expectations} Let $X$ be a square-integrable semimartingale with $\cH$-conditionally independent increments and let $\cP: 0 = t_0 \le t_1 \le \dotsc \le t_N = T$ be a partition of $[0,T]$. Then
 \begin{equation}\label{Eq:RV_expectation}
  \Econd{RV(X,\cP)}{\cH} = \Econd{[X,X]_T}{\cH} + \sum_{k=1}^{n}{\Econd{X_{t_{k}} - X_{t_{k-1}}}{\cH}^2}.
 \end{equation}
 In particular, $\Econd{RV(X,\cP)}{\cH} = \Econd{[X,X]_T}{\cH}$ for all partitions $\cP$ of $[0,T]$, if and only if the process $X$ is a martingale on $[0,T]$.
 \end{cor}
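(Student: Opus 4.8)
The plan is to reduce the identity to a conditional variance--mean decomposition of the individual increments $\Delta_k := X_{t_k} - X_{t_{k-1}}$, and then to recognise the conditional-variance part as $\Econd{[X,X]_T}{\cH}$. The device that makes both steps transparent is the conditional-mean drift $A_t := \Econd{X_t - X_0}{\cH}$, which is $\cH$-measurable (hence $\cF_0$-measurable) for every $t$, together with the process $M := X - X_0 - A$. First I would record the elementary consequence of $\cH$-conditional independence that, for $0 \le s \le t$, $\Econd{X_t - X_s}{\cF_s} = \Econd{X_t - X_s}{\cH}$: testing against a bounded $\cF_s$-measurable $Z$ and using \eqref{Eq:cond_independence} together with $\cH \subset \cF_0 \subset \cF_s$ shows both sides integrate identically, while the right-hand side is already $\cF_s$-measurable. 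This says exactly that $\Econd{M_t - M_s}{\cF_s} = 0$, so $M$ is a martingale; square-integrability of $X$ and contractivity of conditional expectation give $M_t \in L^2$, so $M$ is a square-integrable martingale with $M_0 = 0$.

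Next I would carry out the decomposition increment by increment. Writing $\Delta_k = (M_{t_k} - M_{t_{k-1}}) + (A_{t_k} - A_{t_{k-1}})$ and using that $A_{t_k} - A_{t_{k-1}}$ is $\cH$-measurable while $\Econd{M_{t_k} - M_{t_{k-1}}}{\cH} = 0$ (this is just $\Econd{\Delta_k}{\cH} = A_{t_k} - A_{t_{k-1}}$), the cross term drops out of the $\cH$-conditional expectation and one obtains
\[
\Econd{\Delta_k^2}{\cH} = \Econd{(M_{t_k} - M_{t_{k-1}})^2}{\cH} + \Econd{\Delta_k}{\cH}^2 .
\]
Summing over $k = 1, \dots, N$ already produces the announced term $\sum_k \Econd{\Delta_k}{\cH}^2$ and reduces the proof to the identity $\sum_{k=1}^N \Econd{(M_{t_k} - M_{t_{k-1}})^2}{\cH} = \Econd{[X,X]_T}{\cH}$.

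For this last identity I would use that $M^2 - [M,M]$ is a martingale (true, by square-integrability). Since $\cH \subset \cF_{t_{k-1}}$, the tower property gives $\Econd{M_{t_k} M_{t_{k-1}}}{\cH} = \Econd{M_{t_{k-1}}^2}{\cH}$, whence $\Econd{(M_{t_k} - M_{t_{k-1}})^2}{\cH} = \Econd{M_{t_k}^2 - M_{t_{k-1}}^2}{\cH} = \Econd{[M,M]_{t_k} - [M,M]_{t_{k-1}}}{\cH}$; telescoping over the partition leaves $\Econd{[M,M]_T}{\cH}$. It then remains to replace $[M,M]$ by $[X,X]$, i.e. to show that the drift $A$ contributes nothing to the quadratic variation. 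This is the step I expect to be the crux. Because $X - X_0 - M = A$ is $\cH$-measurable and, under the standing assumption that $X$ is stochastically continuous (as for time-changed L\'evy and Sato processes), $t \mapsto A_t$ is continuous and of finite variation, one has $[A,A] = 0$ and $[M,A] = 0$, so $[X,X] = [M,M]$. The content here is exactly the continuity of the conditional drift: if $X$ had a fixed time of discontinuity carrying nonzero conditional mean, the jump of $A$ would enter both $[X,X]$ and the cross term, and the clean identity \eqref{Eq:RV_expectation} would break.

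Finally I would deduce the martingale characterisation. If $X$ is a martingale, then, $\cH$ being contained in $\cF_{t_{k-1}}$, the tower property and $\Econd{\Delta_k}{\cF_{t_{k-1}}} = 0$ give $\Econd{\Delta_k}{\cH} = 0$ for every $k$ and every partition, so the correction term vanishes and \eqref{Eq:RV_expectation} collapses to $\Econd{RV(X,\cP)}{\cH} = \Econd{[X,X]_T}{\cH}$. Conversely, if this equality holds for all partitions $\cP$, then $\sum_k \Econd{\Delta_k}{\cH}^2 = 0$ for every $\cP$; choosing partitions with an arbitrary pair $s \le t$ as adjacent points forces $\Econd{X_t - X_s}{\cH} = 0$ for all $0 \le s \le t \le T$, and combining this with $\Econd{X_t - X_s}{\cF_s} = \Econd{X_t - X_s}{\cH}$ from the first paragraph yields $\Econd{X_t - X_s}{\cF_s} = 0$, i.e. $X$ is a martingale on $[0,T]$.
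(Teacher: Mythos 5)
Your proof is correct, and it reaches \eqref{Eq:RV_expectation} by a genuinely different route from the paper. The paper's proof is a symmetrization argument: it constructs an $\cH$-conditionally independent copy $Y$ of $X$, observes that $Z = X - Y$ has $\cH$-conditionally symmetric and independent increments, feeds $Z$ into the reverse-martingale machinery of Theorem~\ref{Thm:reverse_martingale} to obtain $\Econd{RV(X-Y,\cP)}{\cG_\infty} = [X-Y,X-Y]_T = [X,X]_T + [Y,Y]_T$, and then expands $RV(X-Y,\cP)$ into $RV(X,\cP)+RV(Y,\cP)$ minus a cross term whose conditional expectation factorizes. You instead perform the conditional Doob decomposition $X = X_0 + M + A$ with $A_t = \Econd{X_t - X_0}{\cH}$, verify that $M$ is a square-integrable martingale directly from \eqref{Eq:cond_independence}, and use only the $L^2$ orthogonality of martingale increments together with the martingale property of $M^2 - [M,M]$. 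Your argument is more elementary and self-contained: it needs neither Theorem~\ref{Thm:reverse_martingale}, nor nested partitions and reverse-martingale convergence, nor the disintegration required to build the independent copy. What the paper's route buys is the reuse of machinery it has already built for Theorem~\ref{Thm:reverse_martingale} and will exploit again in Theorem~\ref{Thm:QV_order}.

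Both proofs hinge on the same unstated hypothesis, and you deserve credit for flagging it explicitly. Your step $[M,M]=[X,X]$ requires the conditional drift $A$ to be continuous, which fails when $X$ has a fixed time of discontinuity with nonzero conditional mean jump; the paper's step $[X-Y,X-Y]_T=[X,X]_T+[Y,Y]_T$ silently requires $[X,Y]=0$, which fails for exactly the same reason, since the independent copies then jump at the same deterministic time. Indeed, $X_t = \xi\,\Ind{t \ge T/2}$ with $\E{\xi}\ne 0$, $\E{\xi^2}<\infty$ and trivial $\cH$ gives $RV(X,\cP)=[X,X]_T=\xi^2$ for every partition, so \eqref{Eq:RV_expectation} fails and so does the ``only if'' part of the martingale characterization. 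The corollary is therefore correct only under the additional assumption that $X$ has no fixed (or predictable) times of discontinuity --- a condition that is part of Assumption~\ref{myass}, under which the corollary is later invoked, but is not among the hypotheses stated here. Apart from this shared caveat, and the routine truncation needed to apply \eqref{Eq:cond_independence} to the unbounded function $f(x)=x$, your argument is complete.
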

\begin{proof}
 Let $Y$ be an $\cH$-conditionally independent copy\footnote{Such a copy can be constructed using disintegration with respect to $\cH$.} of $X$. Then $Z = X - Y$ is a process with $\cH$-conditionally symmetric and independent increments. Decomposing $RV(X-Y,\cP)$ we get
\begin{align}\label{Eq:RV_decomp3}
 RV(X-Y,\cP) &= RV(X,\cP) + RV(Y,\cP) - \\
 &- 2\sum_{k=0}^{n-1}\left(X_{t_{k+1}} - X_{t_k}\right)\left(Y_{t_{k+1}} - Y_{t_k}\right), \notag
\end{align}
which can be bounded from above by $2 (RV(X,\cP) + RV(Y,\cP))$. Since $X$ is square-integrable, we have that $\E{RV(X,\cP)} < \infty$ and it follows that also $\E{RV(X-Y,\cP)} < \infty$. Thus Theorem~\ref{Thm:reverse_martingale} applies and we obtain that
 \begin{equation}\label{Eq:QV_symmetrized}
\Econd{RV(X-Y,\cP)}{\cG_\infty} = [X-Y,X-Y]_T = [X,X]_T + [Y,Y]_T.
 \end{equation}
Taking expectations and combining \eqref{Eq:QV_symmetrized} with \eqref{Eq:RV_decomp3} yields
\begin{align*}
 \Econd{[X,X]_T}{\cH} + \Econd{[Y,Y]_T}{\cH} &= \Econd{RV(X,\cP)}{\cH} + \Econd{RV(Y,\cP)}{\cH} - \\
 &- 2 \sum_{k=0}^{n-1}\Econd{X_{t_{k+1}} - X_{t_k}}{\cH} \cdot \Econd{Y_{t_{k+1}} - Y_{t_k}}{\cH}.
\end{align*}
Since $X$ and $Y$ have equal $\cH$-conditional distributions \eqref{Eq:RV_expectation} follows. It is obvious that the sum in \eqref{Eq:RV_expectation} vanishes if $X$ is a martingale. Conversely, if the sum vanishes for any partition $\cP$, then $\Econd{X_t - X_s}{\cH} = 0$ for any $0 \le s \le t$. By the $\cH$-conditional independence of increments, we conclude that for any $Z \in \cF_s$
\[\E{(X_t - X_s)Z} = \E{\Econd{X_t - X_s}{\cH}Z} = \Econd{X_t - X_s}{\cH} \cdot \Econd{Z}{\cH} = 0.\]
Hence $\Econd{X_t - X_s}{\cF_s} = 0$ and $X$ is a martingale.
\end{proof}

\subsection{Results for semimartingales with symmetric jump measure}

Theorem~\ref{Thm:reverse_martingale} on processes with symmetric and conditionally independent increments is typically not suitable for the applications to variance options that we have in mind. The reason is that the risk-neutral log-price $X$ of some asset can be a martingale only in degenerate cases, as already the asset price $S_t = S_0 e^{X_t}$ itself must be a martingale (at least locally). Assume for illustration that the risk-neutral security price $S_t = S_0 e^{X_t}$ is a true martingale and non-deterministic, then it follows immediately by Jensen's inequality that $\E{X_t} < \log \E{S_t/S_0} = X_0$ and hence that $X$ is not a martingale. Even if $S$ is a (non-deterministic) strictly local martingale that is bounded away from zero, it follows by Fatou's lemma and Jensen's inequality that
\[\E{X_t} \le \liminf_{n \to \infty} \E{X_{t \wedge \tau_n}} < \log \E{S_{t \wedge \tau_n}/S_0} = X_0,\]
for some localizing sequence $(\tau_n)_{n \in \NN}$, and hence that $X$ is not a martingale. For this reason we want to weaken the symmetry assumptions on $X$. We introduce the following assumptions:

\begin{ass}\label{myass}
The process $X$ is a square-integrable semimartingale starting at $X_0 = 0$ with the following properties:
\begin{enumerate}[(a)]
 \item $X$ has $\cH$-conditionally independent increments\label{Ass:PII};
  \item $X$ has no predictable times of discontinuity;
  \item $\nu$, the predictable compensator of the jump measure of $X$, is symmetric, i.e. it satisfies $\nu(\omega,dt,dx) = \nu(\omega,dt,-dx)$ a.s. \label{Ass:symmetry}.
\end{enumerate}
\end{ass}

 \begin{rem}`No predictable times of discontinuity' means that $\Delta X_\tau = 0$ a.s. for each predictable time $\tau$. This condition is also called \emph{quasi-left-continuity} of $X$. It does not rule out discontinuities at inaccessible stopping times, like the jumps of L\'evy processes. For processes with unconditionally independent increments, `no predictable times of discontinuity' can be replaced by `no fixed times of discontinuity' (see \citep[Cor.II.5.12]{Jacod1987}).
  \end{rem}

Let us briefly summarize some of the consequences of these assumptions. First, since $X$ is a square-integrable semimartingale, it is also a special semimartingale (cf. \citep[II.2.27, II.2.28]{Jacod1987}), and we can choose $h(x) = x$ as a truncation function, relative to which the semimartingale characteristics $(B,C,\nu)_h$ are defined. For this choice of truncation functions we have that $B = A$, i.e. the drift $B$ is exactly the predictable finite variation part of the canonical semimartingale decomposition. Second, a semimartingale has $\cH$-conditionally independent increments, if and only if there exists a version of its characteristics $(B,C,\nu)$ that is $\cH$-measurable. The increments are unconditionally independent, if and only if $\cH$ is trivial or equivalently if there exists a version of the characteristics $(B,C,\nu)$ that is deterministic (cf. \citep[Thm.~II.4.15]{Jacod1987}). Third, the assumption that $X$ has no predictable times of discontinuity implies that the version of the characteristics might be chosen such that they are continuous (cf. \citep[Prop.~II.2.9]{Jacod1987}).

 \begin{thm}\label{Thm:QV_order} Let $X$ be a process satisfying Assumption~\ref{myass}, and let $\cP$ be a partition of $[0,T]$. Then there exists a $\sigma$-algebra $\cG_\infty \subset \cF$ such that
  \begin{equation}\label{Eq:expectation_tail}
   \Econd{RV(X,\cP)}{\cG_\infty} = [X,X]_T + RV(B,\cP).
  \end{equation}
  If the increments of $X$ are unconditionally independent, it also holds that
  \begin{equation}\label{Eq:f_centering_tail}
   \Econd{\overline{RV}^h(X,\cP)}{\cG_\infty} \ge \overline{[X,X]}^h_T,
  \end{equation}
  for each $h \in \mathrm{Lip}_1(\Rplus)$, and with equality for $h(x) = x$.
 \end{thm}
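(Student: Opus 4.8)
The plan is to split $X$ into its martingale part and its predictable drift, apply the reverse-martingale machinery of Theorem~\ref{Thm:reverse_martingale} to the martingale part, and account for the drift by a separate reflection argument. Since $X$ is a square-integrable, hence special, semimartingale with $X_0 = 0$, I would write its canonical decomposition as $X = N + B$, where $N$ is a square-integrable martingale with $N_0 = 0$ and $B$ is the predictable finite-variation drift; by Assumption~\ref{myass} and quasi-left-continuity the triple $(B,C,\nu)$, and in particular $B$, may be taken continuous, so that $[X,X] = [N,N]$. The first step is to show that $N$ has $\cH$-conditionally symmetric and independent increments. Conditional independence is inherited from $X$ because $B$ is $\cH$-measurable; for symmetry I would use the conditional Lévy--Khintchine representation of $\Econd{\exp(iu(N_t-N_s))}{\cH}$, whose exponent is $-\tfrac{u^2}{2}(C_t-C_s) + \int_{(s,t]\times\RR}(e^{iux}-1-iux)\,\nu(dr,dx)$. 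Since $\nu$ is symmetric in $x$, this integral equals $\int(\cos(ux)-1)\,\nu(dr,dx)$ and the whole exponent is even in $u$; hence the conditional characteristic function satisfies \eqref{eq:charf_equal} and $N_t - N_s$ is $\cH$-conditionally symmetric.

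With $N$ in hand, Theorem~\ref{Thm:reverse_martingale} and the reverse-martingale convergence theorem apply to $N$ exactly as in \eqref{Eq:conditional_qv}, yielding $\Econd{RV(N,\cP^n)}{\cG_\infty} = [N,N]_T = [X,X]_T$ with $\cG_\infty = \cH \vee \bigcap_n \sigma(RV(N,\cP^n))$; this is the $\sigma$-algebra claimed in the theorem. Embedding the given $\cP$ as the coarsest partition $\cP^0$ of a nested refining sequence with $\mathrm{mesh}(\cP^n)\to 0$, I would then expand
\[
RV(X,\cP) = RV(N,\cP) + RV(B,\cP) + 2\sum_k (N_{t_k}-N_{t_{k-1}})(B_{t_k}-B_{t_{k-1}})
\]
and take $\Econd{\cdot}{\cG_\infty}$ termwise. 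The term $RV(B,\cP)$ is $\cH$-measurable, hence $\cG_\infty$-measurable, and survives; the first term yields $[X,X]_T$.

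The crux is to show the cross term vanishes, for which it suffices to prove $\Econd{N_{t_k}-N_{t_{k-1}}}{\cG_\infty}=0$ (one then pulls out the $\cG_\infty$-measurable, square-integrable factor $B_{t_k}-B_{t_{k-1}}$, noting the product is in $L^1$). This is where I expect the main difficulty. I would reflect $N$ about the partition point $t^*=t_{k-1}$ as in \eqref{Eq:reflection}; by Lemma~\ref{Lem:symmetrization} the reflected process $\wh N$ is equal in law to $N$ conditionally on $\cH$ and leaves every $RV(N,\cP^m)$, $m\ge 0$, invariant, while flipping the increment to $\wh N_{t_k}-\wh N_{t_{k-1}} = -(N_{t_k}-N_{t_{k-1}})$. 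Testing against $H\in\cH$ and an arbitrary bounded function $R$ of the tail variables $\{RV(N,\cP^m)\}_{m\ge 0}$ then forces $\Econd{(N_{t_k}-N_{t_{k-1}})R}{\cH}=0$, and since such products $H\cdot R$ generate $\cG_\infty=\cH\vee\bigcap_m\sigma(RV(N,\cP^m))$ the cross term is zero. This establishes \eqref{Eq:expectation_tail}.

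For the second statement I would assume the increments are unconditionally independent, so that $\cH$ is trivial and the characteristics, in particular $B$, are deterministic; then $RV(B,\cP)$ is a nonnegative constant. Taking full expectations in \eqref{Eq:expectation_tail} gives $\E{RV(X,\cP)} - \E{[X,X]_T} = RV(B,\cP) \ge 0$. Subtracting the centering terms, the inequality \eqref{Eq:f_centering_tail} reduces to the pointwise statement $RV(B,\cP) \ge h(\E{RV(X,\cP)}) - h(\E{[X,X]_T})$, which follows because $h \in \mathrm{Lip}_1(\Rplus)$ gives $h(a)-h(b) \le a-b$ whenever $a \ge b$, applied to $a = \E{RV(X,\cP)} \ge b = \E{[X,X]_T}$ with $a - b = RV(B,\cP)$. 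For $h(x)=x$ these become equalities throughout, which gives the asserted equality case. The only genuinely delicate point is the reflection/cross-term argument of the previous paragraph; the remaining steps are bookkeeping with the canonical decomposition and the Lipschitz bound.
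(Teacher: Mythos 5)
Your proposal is correct and follows essentially the same route as the paper: the canonical decomposition $X=N+B$, verification of the $\cH$-conditional symmetry of the martingale part via the conditional L\'evy--Khintchine exponent, application of Theorem~\ref{Thm:reverse_martingale} to $N$, a reflection argument to kill the cross term, and the $1$-Lipschitz bound for the $h$-centered statement. The only cosmetic differences are that you carry the (correct) factor $2$ in the cross term and establish its vanishing by testing against a generating class for a $\sigma$-algebra containing $\cG_\infty$, which is harmless by the tower property.
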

 \begin{rem}
  In the proof below $\cG_\infty$ will be constructed explicitly as the tail $\sigma$-algebra of a sequence similar to \eqref{Eq:define_Gn}.
  \end{rem}
  \begin{proof}
 Let $(\cP^n)$ be a nested sequence of partitions with $\mathrm{mesh}(\cP^n) \to 0$, such that $\cP^N = \cP$ for some $N$. Let $X = Y + B$ be the canonical decomposition of $X$ into local martingale part and predictable finite variation part. Since $X$ is square-integrable $Y$ is a square-integrable martingale and $B$ has square-integrable total variation. In addition, define
 \begin{equation}\label{Eq:Gn_repeat}
 \cG_n = \cH \vee \sigma\Big(RV(Y,\cP^{n}), RV(Y,\cP^{n+1}), \dotsc\Big)
 \end{equation}
and let $\cG_\infty$ be the tail $\sigma$-algebra $\bigcap_{n \in \NN} \cG_n$ of $(\cG_n)_{n \in \NN}$.
Following \citep[Thm.~II.6.6]{Jacod1987}, the conditional characteristic function of an increment of $Y = X - B$ is given by
\begin{equation}\label{Eq:cond_char}
\begin{split}
 \phi_{t,s}(u) &:= \Econd{e^{iu(Y_t - Y_s)}}{\cH} = \\
 &= \exp \left(-\frac{u^2}{2}(C_t - C_s) + \int_s^t \int_{\RR}{\left(e^{iux} - 1 - iux\right)}\nu(\omega,dr,dx) \right).
\end{split}
\end{equation}
The symmetry of $\nu$ implies that $\phi_{t,s}(u) = \phi_{t,s}(-u)$, which by equation~\eqref{eq:charf_equal} implies the $\cH$-conditional symmetry of $Y_t - Y_s$. Hence $Y$ is a process with $\cH$-conditionally symmetric and independent increments. We also have that $\E{RV(Y,\cP^n)} < \infty$ since $Y$ is square-integrable. Thus Theorem~\ref{Thm:reverse_martingale} can be applied to $Y$ and we obtain that
\begin{equation}\label{Eq:tail_sigma1}
\Econd{RV(Y,\cP^n)}{\cG_\infty} = [Y,Y]_T,
\end{equation}
for all $n \in \NN$ and in particular for $n = N$ and $\cP^n = \cP$. Since $X$ is quasi-left-continuous, its predictable finite variation component $B$ is even continuous, and it follows by \citep[Prop.~I.4.49]{Jacod1987} that
\begin{equation}\label{Eq:QV_decomp}
[X,X]_T = [Y,Y]_T + [B,B]_T = [Y,Y]_T.
\end{equation}
Furthermore we can decompose the realized variance of $X$ as
\begin{equation}\label{Eq:RV_decomp_next}
 RV(X,\cP) = RV(Y,\cP) + RV(B,\cP) + \sum_{k=1}^n \left(Y_{t_k} - Y_{t_{k-1}}\right) \cdot \left(B_{t_k} - B_{t_{k-1}}\right).
\end{equation}
We set $U := \Econd{\left(Y_{t_k} - Y_{t_{k-1}}\right) \cdot \left(B_{t_k} - B_{t_{k-1}}\right)}{\cG_n}$ and show that $U = 0$, similar to the proof of Theorem~\ref{Thm:reverse_martingale}. We set $\cR_n := \sigma\Big(RV(Y,\cP^{n}), RV(Y,\cP^{n+1}), \dotsc \Big)$ and using the reflection principle from Lemma~\ref{Lem:symmetrization} we obtain for all $H \in \cH$ and $Q \in \cR_n$  that
\begin{align*}
\E{UHQ} &= \E{\Econd{\left(Y_{t_k} - Y_{t_{k-1}}\right) Q}{\cH} \left(B_{t_k} - B_{t_{k-1}}\right) H} = \\
&= \E{\Econd{\left(Y_{t_{k-1}} - Y_{t_k}\right) Q}{\cH} \left(B_{t_k} - B_{t_{k-1}}\right) H} = -\E{UHQ}
\end{align*}
and hence that indeed, $U=0$. Taking $\cG_\infty$-conditional expectations, we have
\[\Econd{\left(Y_{t_k} - Y_{t_{k-1}}\right) \cdot \left(B_{t_k} - B_{t_{k-1}}\right)}{\cG_\infty} = 0, \]
and thus
\begin{equation}\label{Eq:RV_decomp_next2}
 \Econd{RV(X,\cP)}{\cG_\infty} = \Econd{RV(Y,\cP)}{\cG_\infty} + RV(B,\cP).
\end{equation}
Combining \eqref{Eq:tail_sigma1}, \eqref{Eq:QV_decomp} and \eqref{Eq:RV_decomp_next2} yields \eqref{Eq:expectation_tail}. Suppose now that the increments of $X$ are unconditionally independent. Then $B$ is even deterministic (see the discussion after Assumption~\ref{myass}), and we derive from \eqref{Eq:expectation_tail} that
\[\Econd{\overline{RV}^f(X,\cP)}{\cG_\infty} = [X,X]_T + RV(B,\cP) - f\Big(\E{[X,X]_T} + RV(B,\cP)\Big).\]
Applying the Lipschitz property of $f$ \eqref{Eq:f_centering_tail} follows.
\end{proof}

\subsection{Convex Order Relations}

By a simple application of Jensen's inequality, Theorem~\ref{Thm:QV_order} can be translated into a convex order relation between the realized variance and the quadratic variation of $X$.

\begin{thm}\label{Thm:order1} Let $X$ be a process satisfying Assumption~\ref{myass}, and let $\cP$ be a partition of $[0,T]$. Then the following holds:
\begin{enumerate}[(a)]
\item For any increasing convex function $f$
\begin{equation}\label{Eq:convex_order4}
\E{f\Big(RV(X,\cP)\Big)} \ge \E{f\Big([X,X]_T\Big)}.
\end{equation}
If $X$ is a martingale, then \eqref{Eq:convex_order4} holds for any convex function $f$.
\item If the increments of $X$ are unconditionally independent, then the $h$-centered realized variance satisfies
\begin{equation}\label{Eq:convex_order5}
\E{f\Big(\overline{RV}^h(X,\cP)\Big)} \ge \E{f\Big(\overline{[X,X]}_T^h\Big)}
\end{equation}
for any increasing convex function $f$ and $h \in \mathrm{Lip}_1(\Rplus)$. If $X$ is a martingale or $h(x) = x$, then \eqref{Eq:convex_order5} holds for any convex function $f$.
\end{enumerate}
\end{thm}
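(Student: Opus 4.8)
The plan is to deduce both statements directly from the (conditional) expectation identities and inequalities in Theorem~\ref{Thm:QV_order} by applying conditional Jensen's inequality with respect to the tail $\sigma$-algebra $\cG_\infty$ and then integrating. The recurring mechanism is that conditioning on $\cG_\infty$ turns $RV(X,\cP)$ into a quantity dominating $[X,X]_T$ (respectively its $h$-centering), while Jensen guarantees that applying a convex function inside the conditional expectation can only increase it.

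First I would treat part~(a). By Theorem~\ref{Thm:QV_order},
\[\Econd{RV(X,\cP)}{\cG_\infty} = [X,X]_T + RV(B,\cP),\]
and since $RV(B,\cP) = \sum_{k=1}^n (B_{t_k} - B_{t_{k-1}})^2 \ge 0$ this gives $\Econd{RV(X,\cP)}{\cG_\infty} \ge [X,X]_T$. For convex $f$, conditional Jensen yields
\[\Econd{f\big(RV(X,\cP)\big)}{\cG_\infty} \ge f\Big(\Econd{RV(X,\cP)}{\cG_\infty}\Big) = f\big([X,X]_T + RV(B,\cP)\big).\]
If $f$ is in addition increasing, monotonicity together with $RV(B,\cP)\ge 0$ bounds the right-hand side below by $f([X,X]_T)$; taking expectations and using the tower property then gives \eqref{Eq:convex_order4}. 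When $X$ is a martingale the predictable part $B$ vanishes, so $RV(B,\cP)=0$ and the identity sharpens to $\Econd{RV(X,\cP)}{\cG_\infty} = [X,X]_T$; here the monotonicity of $f$ is no longer needed and \eqref{Eq:convex_order4} holds for every convex $f$.

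For part~(b) I would argue identically, but starting from the inequality $\Econd{\overline{RV}^h(X,\cP)}{\cG_\infty} \ge \overline{[X,X]}^h_T$ of Theorem~\ref{Thm:QV_order}, which holds under unconditional independence. Conditional Jensen gives $\Econd{f(\overline{RV}^h(X,\cP))}{\cG_\infty} \ge f\big(\Econd{\overline{RV}^h(X,\cP)}{\cG_\infty}\big)$, and for increasing $f$ this is at least $f(\overline{[X,X]}^h_T)$; integrating yields \eqref{Eq:convex_order5}. For $h(x)=x$ Theorem~\ref{Thm:QV_order} supplies equality rather than inequality, and when $X$ is a martingale one has $B=0$, hence $\E{RV(X,\cP)} = \E{[X,X]_T}$, so that the two centering constants coincide and $\Econd{\overline{RV}^h(X,\cP)}{\cG_\infty} = \overline{[X,X]}^h_T$; in both of these cases the monotonicity step is superfluous, so the conclusion holds for all convex $f$.

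The calculations are routine, so I do not anticipate a genuine obstacle; the only points needing care are integrability and a precise bookkeeping of when increasingness of $f$ is actually invoked. Square-integrability of $X$ keeps $RV(X,\cP)$ and $[X,X]_T$ in $L^1$, and any convex $f$ lies above a supporting affine function, which makes each conditional Jensen step and each expectation well defined. The increasingness of $f$ is used precisely to discard the nonnegative term $RV(B,\cP)$ (resp. to pass through the inequality of Theorem~\ref{Thm:QV_order}), and it is dispensable exactly in the martingale and $h(x)=x$ cases, where that term is absent or the theorem already delivers an equality.
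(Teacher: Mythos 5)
Your proposal is correct and follows the same main line as the paper: apply conditional Jensen to the identity $\Econd{RV(X,\cP)}{\cG_\infty} = [X,X]_T + RV(B,\cP)$ (resp.\ to the inequality \eqref{Eq:f_centering_tail}), use monotonicity of $f$ to discard the nonnegative term $RV(B,\cP)$, and integrate. The one place you diverge is the upgrade from increasing convex to convex order in the martingale and $h(x)=x$ cases: the paper establishes $\E{RV(X,\cP)} = \E{[X,X]_T}$ via Corollary~\ref{cor:expectations} and then invokes the standard equivalence of $\ge_\text{icx}$ and $\ge_\text{cx}$ under equal means (Shaked--Shanthikumar, Thm.~4.A.35), whereas you argue directly that a martingale special semimartingale has $B=0$ by uniqueness of the canonical decomposition, so the conditional identity becomes an exact equality $\Econd{RV(X,\cP)}{\cG_\infty} = [X,X]_T$ and Jensen applies for arbitrary convex $f$ without any monotonicity step. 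Both arguments are valid; yours is slightly more self-contained (no external order-theoretic citation), while the paper's phrasing makes explicit that the result is really a statement about equal means, which is the form reused elsewhere in the article. Your remarks on integrability (convex $f$ bounded below by an affine function, so all conditional expectations are well defined) are the right bookkeeping and match what the paper leaves implicit.
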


Summing up the theorem, we can say that the `increasing convex order conjecture' set forth in the introduction holds true at least for square-integrable semimartingales $X$ with conditionally independent increments, symmetric jumps and without predictable times of discontinuity. If $X$ is also a martingale, then even the `convex order conjecture' holds true. However, as discussed at the beginning of the section, $X$ cannot be a martingale if $e^X$ is assumed to be a non-deterministic martingale, such that only the increasing convex order conjecture holds in the relevant applications in finance.

\begin{proof} Applying an increasing convex function $f$ to both sides of \eqref{Eq:expectation_tail} and using Jensen's inequality, we have that
\[\Econd{f\Big(RV(X,\cP)\Big)}{\cG_\infty} \ge f\Big([X,X]_T\Big).\]
Taking (unconditional) expectations the convex order relation \eqref{Eq:convex_order4} follows for all increasing convex $f$. If $X$ is a martingale, then by Corollary~\ref{cor:expectations} $\E{RV(X,\cP)} = \E{[X,X]_T}$ and it follows from \citep[Thm.4.A.35]{Shaked2007} that \eqref{Eq:convex_order4} holds even for any convex $f$. Equation \eqref{Eq:convex_order5} is derived analogously from \eqref{Eq:f_centering_tail}. In the case that $X$ is a martingale or $h(x) = x$ `increasing convex' can again be replaced by `convex' since $\E{\overline{RV}^h(X,\cP)} = \E{\overline{[X,X]}^h_T}$.
\end{proof}

\section{Results on predictable quadratic variation}

At this point we want to bring the \emph{predictable quadratic variation} $\scal{X}{X}$ of $X$ into the picture. As discussed in the introduction we conjecture the convex order relation $[X,X]_T \ge_\text{cx} \scal{X}{X}_T$ for many processes of interest. Indeed the following holds.

\begin{thm}\label{Thm:predictable} Let $X$ be a square-integrable semimartingale with $\cH$-conditionally independent increments. Then the quadratic variation $[X,X]$ dominates the predictable quadratic variation $\scal{X}{X}$ in convex order, that is 
\begin{equation}\label{Eq:convex_order6}
\E{f\Big([X,X]_T\Big)} \ge \E{f\Big(\scal{X}{X}_T\Big)}.
\end{equation}
for any convex function $f$.
\end{thm}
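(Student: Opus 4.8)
The plan is to reduce the convex order relation to a single application of Jensen's inequality, exactly as in the proof of Theorem~\ref{Thm:order1}, by exhibiting $\scal{X}{X}_T$ as an $\cH$-conditional expectation of $[X,X]_T$. Concretely, I will establish the identity
\[
\Econd{[X,X]_T}{\cH} = \scal{X}{X}_T,
\]
after which the conclusion is immediate: for convex $f$, Jensen's inequality gives $\Econd{f([X,X]_T)}{\cH} \ge f\big(\Econd{[X,X]_T}{\cH}\big) = f(\scal{X}{X}_T)$, and taking unconditional expectations yields \eqref{Eq:convex_order6}. In contrast to the earlier results, no symmetry of the jump measure and no reverse-martingale construction are needed here, since the required averaging is already built into the passage from $[X,X]$ to its predictable compensator.

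To prove the key identity I would proceed in two steps. First, recall that $\scal{X}{X}$ is by definition the predictable compensator of $[X,X]$, so that $M := [X,X] - \scal{X}{X}$ is a local martingale null at $0$. Because $X$ is square-integrable, both $[X,X]_T$ and $\scal{X}{X}_T$ are integrable, and since $[X,X]$ and $\scal{X}{X}$ are non-decreasing we have $\abs{M_t} \le [X,X]_T + \scal{X}{X}_T$ for every $t \in [0,T]$. Thus $M$ is a local martingale dominated on $[0,T]$ by a single integrable random variable, hence a genuine uniformly integrable martingale there. Consequently $\Econd{M_T}{\cF_0} = M_0 = 0$, and as $\cH \subset \cF_0$ this gives $\Econd{[X,X]_T}{\cH} = \Econd{\scal{X}{X}_T}{\cH}$.

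Second, I would argue that $\scal{X}{X}_T$ is already $\cH$-measurable, which upgrades the previous display to the desired $\Econd{[X,X]_T}{\cH} = \scal{X}{X}_T$. This is precisely where the conditional independence of increments enters: as recalled after Assumption~\ref{myass}, a semimartingale has $\cH$-conditionally independent increments if and only if it admits a version of its characteristics $(B,C,\nu)$ that is $\cH$-measurable. Since $\scal{X}{X}_T = C_T + \int_0^T \int_\RR x^2\,\nu(dr,dx)$, it is a functional of $C$ and $\nu$ alone and hence $\cH$-measurable for this version.

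I expect the main obstacle to be the careful handling of integrability and measurability rather than any deep new idea: one must verify that the compensated process $M$ is a true (not merely local) martingale on $[0,T]$, so that conditioning on $\cF_0$ is legitimate, and one must justify the representation of $\scal{X}{X}_T$ through the $\cH$-measurable characteristics in the absence of the quasi-left-continuity assumption used in Section~\ref{Sec:main}. Both points are handled by the square-integrability hypothesis together with the standard decomposition $[X,X] = C + \sum (\Delta X)^2$ and the definition of $\nu$ as the compensator of the jump measure. Alternatively, one can bypass the characteristics entirely by writing $\scal{X}{X}_T$ as the limit in probability of $\sum_k \Econd{(X_{t_k}-X_{t_{k-1}})^2}{\cF_{t_{k-1}}}$ over refining partitions, using $\cH$-conditional independence to replace each $\cF_{t_{k-1}}$ by $\cH$, and then passing to the limit in $\Econd{RV(X,\cP^n)}{\cH}$.
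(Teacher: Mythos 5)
Your proposal is correct and follows essentially the same route as the paper's proof: identify $M = [X,X] - \scal{X}{X}$ as a true martingale thanks to square-integrability, use $\cH \subset \cF_0$ together with the $\cH$-measurability of the characteristics (hence of $\scal{X}{X}_T$) to obtain $\Econd{[X,X]_T}{\cH} = \scal{X}{X}_T$, and conclude by Jensen's inequality. The only difference is that you spell out the domination argument upgrading the local martingale to a genuine one, a step the paper asserts without detail.
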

\begin{proof}
Let $(B,C,\nu)$ be the characteristics of the semimartingale $X$ relative to the truncation function $h(x) = x$. Following \citet[Sec.~II.2.a]{Jacod1987} the quadratic variation of $X$ is given by
\[[X,X]_t = C_t^2 + \int_0^t \int_\RR x^2 \mu(\omega,dx,ds),\]
where $\mu(\omega,dx,ds)$ is the random measure associated to the jumps of $X$. The predictable quadratic variation in turn is given (cf. \citet[Prop.~II.2.29]{Jacod1987}) by 
\[\scal{X}{X}_t = C_t^2 + \int_0^t \int_\RR x^2 \nu(\omega,dx,ds).\]
Furthermore, as $X$ has $\cH$-conditionally independent increments, the characteristics $(B,C,\nu)$ of $X$ are $\cH$-measurable. The process $[X,X]_t - \scal{X}{X}_t$ is a local $\FF$-martingale, and due to the square-integrability assumption on $X$ a true $\FF$-martingale. Together with the fact that $\cH \subset \cF_0$ we conclude that
\[\Econd{[X,X]_T - \scal{X}{X}_T}{\cH} = \Econd{\Econd{[X,X]_T - \scal{X}{X}_T}{\cF_0}}{\cH} = 0.\]
Applying a convex function $f$ to both sides of $\Econd{[X,X]_T}{\cH} = \scal{X}{X}_T$ and using Jensen's inequality we obtain
\[\Econd{f\Big([X,X]_T\Big)}{\cH} \ge f\Big(\scal{X}{X}_T\Big).\]
The desired result follows by taking expectations of this inequality.
\end{proof}

\section{Applications to variance options}
We now apply the results of Section~\ref{Sec:main} to options on variance. We distinguish between models with unconditionally independent increments, such as exponential L\'evy models and Sato processes, and models with conditionally independent increments, such as stochastic volatility models with jumps and time-changed L\'evy models. In the first class of models we obtain stronger results than in the latter.

\subsection{Models with independent increments}
\subsubsection{Exponential L\'evy models with symmetric jumps}\label{Sub:levy}
In an exponential L\'evy model the log-price $X$ is given by a L\'evy process with L\'evy triplet $(b,\sigma^2 ,\nu)$. Clearly $X$ is a semimartingale with independent increments, and without fixed times of discontinuity. If the L\'evy measure $\nu$ is symmetric and satisfies $\int x^2 \nu(dx) < \infty$ then Assumption~\ref{myass} holds with trivial $\cH$ and the results of Section~\ref{Sec:main} apply. We list some commonly used L\'evy models with symmetric jump measure; for definitions and notation we refer to \citet[Ch.~4]{Cont2004}, or in case of the CGMY model to \citet{Carr2003a}:
\begin{itemize}
\item the Black-Scholes model,
\item the Merton model with $0$ mean jump size,
\item the Kou model with symmetric tail decay and equal tail weights ($\lambda_+ = \lambda_-$ and $p = 1/2$),
\item the Variance Gamma process with $\theta = 0$,
\item the Normal Inverse Gaussian process with $\theta = 0$,
\item the CGMY model with symmetric tail decay ($G = M$),
\item the generalized hyperbolic model with symmetric tail decay ($\beta = 0$).
\end{itemize}

\subsubsection{Symmetric Sato processes}\label{Sub:sato}
A Sato process is a stochastically continuous process that has independent increments and is self-similar in the sense that $X_{\lambda t} \stackrel{d}{=} \lambda^\gamma X_t$ for all $\lambda, t \ge 0$ and for some fixed exponent $\gamma > 0$. Such processes have been studied systematically by \citet{Sato1991}, and used for financial modeling for example in \citet{Carr2007}. In \citet{Carr2010} the authors consider options on quadratic variation, when the log-price process $X$ is a Sato process. Following \citep{Sato1991}, a Sato process is a semimartingale with the deterministic characteristics $(t^\gamma b, t^{2\gamma} \sigma^2, \nu_t(dx))$ where $\nu_t(dx)$  is of the form $\nu_t(B) = \int{\Ind{B}(t^\gamma x)}\nu(dx)$ for some fixed L\'evy measure $\nu$. If $\nu$ is symmetric with $\int x^2 \nu(dx) < \infty$ then $X$ satisfies Assumption~\ref{myass} with trivial $\cH$. Referring to the terminology and notation of \citep{Carr2007}, this symmetry condition is satisfied for the VGSSD model with $\theta = 0$, the NIGSSD model with $\theta = 0$ and the MXNRSSD model with $b = 0$.

\subsubsection{Order relations between option prices}
For the models described in \ref{Sub:levy} and \ref{Sub:sato}, Theorem~\ref{Thm:order1} implies the following order relations between the prices of options on variance:

\begin{description}
\item[Fixed Strike Calls] Consider a call on variance with a fixed strike, i.e. with a payoff function $x \mapsto (x - K)^+$ where $K \ge 0$. In this case Theorem~\ref{Thm:order1} applies with $h = 0$ and $f(x) = (x - K)^+$. Hence the price of a fixed-strike call on discretely observed realized variance is larger than the price of its continuous counterpart.

 \item[Relative Strike Calls] Let $s$ be the swap rate, and consider a call with payoff function $x \mapsto (ks - x)^+$. The difference to the fixed strike case is that now the swap rate $s$ also depends on whether discrete or continuous observations are used for pricing. If $k \le 1$, that is if the call is in-the-money or at-the-money then Theorem~\ref{Thm:order1} applies with $h = kx$ and $f(x) = (x)^+$. Hence, the price of a relative-strike in-the-money-call on discrete realized variance is larger than the price of its continuous counterpart.

 \item[ATM Puts and Straddles] Consider an at-the-money put with payoff $x \mapsto (s - x)^+$ or an at-the-money straddle with payoff $x \mapsto |s - x|$, where $s$ is the swap rate. In these cases Theorem~\ref{Thm:QV_order} applies with $h(x) = x$ and $f(x) = (-x)^+$ or $f(x) = |x|$. Hence, the price of an at-the-money put or an at-the-money straddle on discrete realized variance is larger than the price of its continuous counterpart.
\end{description}

\subsection{Models with conditionally independent increments}
\subsubsection{Time-changed L\'evy processes with symmetric jumps}\label{Sub:tc_levy}
Consider a L\'evy process $L$ with L\'evy triplet $(b,\sigma,\nu)$ where the jump measure $\nu$ is symmetric with $\int x^2 \nu(dx) < \infty$, and a continuous, integrable and increasing process $\tau_t$, independent of $L$. Letting $\tau_t$ act as a stochastic clock, we may define the time-changed process $X_t = L_{\tau_t}$. The process $X$ is a semimartingale with characteristics $(b\tau_t,\sigma \tau_t, \nu \tau_t)$. Defining $\cH = \sigma(\tau_t, t \ge 0)$ and $\cF_t = \cH \vee \sigma(X_s, 0 \le s \le t)$ we observe that $X$ is a $\FF$-adapted square-integrable semimartingale with $\cH$-conditionally independent increments and symmetric jumps, and hence satisfies Assumption~\ref{myass}.

\subsubsection{Stochastic volatility models with jumps}\label{Sub:sv_jump}
Suppose that the risk-neutral log-price $X$ has the semimartingale representation
\begin{equation}\label{Eq:generic_SV}
dX_t = b_t \,dt + \sigma_t\,dW_t + x \star \left(\mu(dt,dx) - \nu_t(dx) dt\right), \qquad X_0 = 0
\end{equation}
where $W$ is a $\FF$-Brownian motion and $\mu(dt,dx)$ a $\FF$-Poisson random measure with compensator $\nu_t(x) dt$, and $b_t, \sigma_t, \nu_t$ are $\cH$-measurable with $\cH \subset \cF_0$. Moreover, assume that $\nu_t(dx)$ is symmetric and that
\[\int_0^T{\left(\E{b_t^2} + \E{\sigma_t^2} + \E{\int x^2 \nu_t(dx)}\right)dt} < \infty,\]
such that $X$ is square-integrable (cf. \citet[p.~8f]{KM2010}). Then $X$ satisfies Assumption~\ref{myass} and the results of Section~\ref{Sec:main} apply. This setting includes the uncorrelated Heston model and extensions with symmetric jumps.

\subsubsection{Order relations between option prices}
For the models described in \ref{Sub:tc_levy} and \ref{Sub:sv_jump}, Theorem~\ref{Thm:order1} implies the following order relation between the prices of options on variance:

\begin{description}
\item[Fixed Strike Calls] Consider a call on variance with a fixed strike, i.e. with a payoff function $x \mapsto (x - K)^+$ where $K \ge 0$. In this case Theorem~\ref{Thm:order1} applies with $h = 0$ and $f(x) = (x - K)^+$. Hence the price of a fixed-strike call on discretely observed realized variance is always bigger than the price of its continuous counterpart.
\end{description}

\section{Conclusions and counterexamples}
In this article we have shown that the increasing convex order conjecture for realized variance holds true in a large class of asset pricing models, namely under the assumption that the log-price $X$ is a square-integrable semimartingale with conditionally independent increments, symmetric jumps and no predictable times of discontinuity. However, the numerical evidence given e.g. in \citet{KM2010} suggests that the conjecture may hold true under even more general conditions. In particular for L\'evy models with asymmetric jumps the conjecture seems to be valid, although this case is not covered by the assumptions of the article. It is not obvious -- at least not to the authors -- how the strategy of the proof can be extended to these cases, in particular how the symmetry condition on the jumps can be relaxed. On the other hand removing the assumption of conditionally independent increments easily leads to a violation of the convex order conjecture. A numerical counterexample for a stochastic volatility model with nonzero leverage parameter has recently appeared in \citet{Bernard2012}. Another counterexample can be given using a time-changed Brownian motion; it is based on suggestions by Walter Schachermayer and David Hobson.
\begin{counterexample}Let $B$ be an $\FF$-Brownian motion and define the stopping time $\tau = \inf \set{t: B_t \not \in [-1,1]}$. The stopped process $B^\tau$ is a bounded martingale and can be closed at infinity by adding the terminal value $B^\tau_\infty = B_\tau$, which only takes the values $\pm 1$. Let $f$ be an increasing, continuous and bijective function from $[0,1]$ to $[0,\infty]$ (such as $t \mapsto \tan \left(t\tfrac{\pi}{2}\right)$). Define on the interval $[0,1]$ the process $X = \left((B^\tau)_{f(t)}\right)_{t \in [0,1]}$ which is a continuous martingale w.r.t $\FF' = \left(\cF_{f(t)}\right)_{t \in [0,1]}$. Using the continuity of $f$ we obtain $[X,X]_1 = \lim_{t \to 1} [B^\tau,B^\tau]_{f(t)} = \tau$. Choosing the partition $\cP = \set{0,1}$ we get $RV(X,\cP) = (\pm 1)^2 = 1$ a.s. Using Jensen's inequality and the fact that $\E{\tau} = 1$ we conclude that $[X,X]_1 \ge_\text{cx} RV(X,\cP)$. Since $[X,X]_1$ is truly random the opposite relation $RV(X,\cP) \ge_\text{cx} [X,X]_1$ cannot hold true. This counterexample can easily be extended to other partitions of $[0,1]$ by introducing additional stopping times.
\end{counterexample}
The next counterexample shows that also the convex order relation $[X,X]_{T} \geq_{\text{cx}} \langle X,X \rangle _{T}$ between predictable and ordinary quadratic variation shown in Theorem~\ref{Thm:predictable} collapses if the assumption of conditionally independent increments is violated.

\begin{counterexample}
Let $\theta$ be a random variable taking the values $\pm 1$ with probability $\frac{1}{2}$ each. In addition, let $\tau$ be a uniformly distributed random time on $[0,1]$. Define the cadlag process $Z_t := \theta \Ind{\tau \le t}$. This process is zero up to time $\tau$ and then jumps to one of the values $\pm 1$. Denote by $\HH$ the natural filtration of $Z$; it is obvious that $Z$ is a $\HH$-martingale. By direct calculation
\[[Z,Z]_t = \theta^2 \Ind{\tau \le t} = \Ind{\tau \le t}.\]
In particular $[Z,Z]_1 = 1$ almost surely. By \citet{Bielecki2002} the `hazard function' $H$ corresponding to the random time $\tau$ is given by $H(t) = -\log(1-t)$ and $\Ind{\tau \le t} - H(t \wedge \tau)$ is a martingale. We conclude that $t \mapsto H(t \wedge \tau)$ is the predictable compensator of $t \mapsto \Ind{\tau \le t}$ and hence that
\[\scal{Z}{Z}_t = -\log(1 - (t \wedge \tau)).\]
In particular $\scal{Z}{Z}_1 = -\log(1 - \tau)$, i.e. it is exponentially distributed with rate $1$, and hence $\scal{Z}{Z}_1 \ge_\text{cx} [Z,Z]_1$, while the reverse relation does not hold true.
\end{counterexample}

\bibliographystyle{plainnat}
\bibliography{convex_ref}
\end{document}